\DeclareSymbolFont{rsfscript}{OMS}{rsfs}{m}{n}
\DeclareSymbolFontAlphabet{\mathrsfs}{rsfscript}
\DeclareMathOperator{\dt}{.}
\DeclareMathOperator{\Syn}{Syn}
\begin{document}
\title{Principal ideal languages \\ and synchronizing automata}

\author{Vladimir V. Gusev, Marina I. Maslennikova, Elena V. Pribavkina}

\institute{Ural Federal University, Ekaterinburg, Russia\\
\email{vl.gusev@gmail.com, maslennikova.marina@gmail.com, elena.pribavkina@usu.ru}}

\maketitle

\begin{abstract}
We study ideal languages generated by a single word. We provide an algorithm to construct a strongly connected synchronizing automaton for which such a language serves as the language of synchronizing words. Also we present a compact formula to calculate the syntactic complexity of this language.\\
\textbf{Keywords:} ideal language, synchronizing automaton, synchronizing word, strongly connected automaton, syntactic complexity.
\end{abstract}

\section{Introduction}

Let $\mathscr{A}=\langle Q,\Sigma,\delta\rangle$ be a \textit{deterministic finite automaton} (DFA for short), where $Q$ is the \textit{state set}, $\Sigma$ stands for the \textit{input alphabet}, and $\delta: Q\times\Sigma\rightarrow Q$ is the \textit{transition function} defining an action of the letters in $\Sigma$ on $Q$. The action extends in a natural way to an action $Q\times \Sigma^{*}\rightarrow Q$ of the free monoid $\Sigma^{*}$ over $\Sigma$; the latter action is also denoted by $\delta$. When $\delta$ is clear from the context, we will write $q\dt w$ instead of $\delta(q,w)$ for $q\in Q$ and $w\in \Sigma^{*}$. In the theory of formal languages the definition of a DFA usually includes the set $F\subseteq Q$ of \textit{terminal states} and an \textit{initial state} $q_0\in Q$. We will use this definition when dealing with automata as devices for recognizing languages. The language $L\subseteq \Sigma^{*}$ is \textit{recognized} (or \textit{accepted}) by an automaton $\mathscr{A}=\langle Q,\Sigma,\delta, F, q_0\rangle$ if $L=\{w\in\Sigma^{*}\mid \delta(q_0,w)\in F\}$. We also use standard concepts of the theory of formal languages such as regular language, minimal automaton, etc. \cite{Perrin}

A DFA $\mathscr{A}=\langle Q,\Sigma,\delta\rangle$ is called
\emph{synchronizing} if there exists a word $w  \in \Sigma^{*}$ which leaves
the automaton in unique state no matter which state in $Q$ it is started to read: $\delta(q ,w)=\delta(q', w)$ for all $q, q' \in Q$. Such word is said
to be \textit{synchronizing} (or \emph{reset}) for the DFA $\mathscr{A}$. This notion has been widely studied
since the work of Jan \v{C}ern\'{y} \cite{Ce64} in 1964. He conjectured that any synchronizing DFA with $n$ states
possesses a synchronizing word of length at most $(n-1)^2$. This conjecture is widely open and is considered one of 
the most longstanding open problems in the combinatorial theory of finite automata. Various techniques were developed 
to approach this conjecture. For more information on synchronizing automata we refer
the reader to the thorough survey by Mikhail Volkov \cite{Vo_Survey}. In this paper we focus on language theoretic
aspects of the \v{C}ern\'{y} conjecture and related questions.

Recall that a language $L$ over $\Sigma$ is called \textit{ideal} if $L=\Sigma^{*}L\Sigma^{*}$. By $\Syn(\mathscr{A})$ we denote the language of all words synchronizing
$\mathscr{A}$. It is easy to see that $\Syn(\mathrsfs{A})$ is an ideal language.
In what follows we consider only ideal regular languages. It was observed in \cite{SOFSEM} that the minimal deterministic automaton $\mathrsfs{A}_L$ recognizing an ideal regular language $L$ is synchronizing, and $\Syn(\mathrsfs{A}_L)=L$. Interesting question arises: how many states the smallest automaton $\mathscr{A}$ such that $\Syn(\mathrsfs{A})=L$ may have? This question was posed in \cite{SOFSEM} and the notion of reset complexity
was introduced. 
The \textit{reset complexity} $rc(L)$ of an ideal language $L$ is the minimal possible number of states in a synchronizing automaton $\mathscr{A}$ such that $\Syn(\mathscr{A})=L$.
For brevity we will call the corresponding automaton MSA (\textit{minimal synchronizing automaton}). 
The \v{C}ern\'{y} conjecture can be stated in terms of reset complexity as follows. Let $\ell$ be the minimal length of words in 
an ideal language $L$, then $rc(L)\ge \sqrt{\ell}+1$. Even a lower bound $rc(L)\ge \frac{\sqrt{\ell}}{C}$ for some constant $C$
would be a major breakthrough.

From descriptive complexity point of view it is interesting to compare reset complexity with the classical state complexity. 
The \textit{state complexity} $sc(L)$ of a regular language $L$ is the number of states in $\mathrsfs{A}_L$. In \cite{SOFSEM} it was observed that $rc(L)\le sc(L)$.  Also in \cite{SOFSEM} it was shown that in some cases $rc(L)$ can be exponentially smaller than $sc(L)$. In particular, it means that the description of an ideal language $L$ by means of an automaton $\mathscr{A}$ for which $\Syn(\mathscr{A})=L$ can be exponentially more succinct than the ``standard'' description via minimal automaton recognizing $L$. The minimal automaton of an ideal regular language always has a sink state (a state fixed by all letters), whereas the corresponding MSA may be strongly connected, which means that for any two states $p$ and $q$ ($p\neq q$) there exists a word mapping $p$ to $q$.
Automata with the sink state and strongly connected automata are essential for the \v{C}ern\'{y} conjecture, since it was shown in \cite{Vo_CIAA07} that it is enough to prove this conjecture for each of the two classes of automata. Thus, we may ask whether it is always possible to construct a strongly connected synchronizing DFA for which $L$ serves as the language of synchronizing words.

We begin to approach this question by considering principal ideal languages, i.e. ideal languages generated by a single word. A principal ideal language is a partial case of a finitely generated ideal language. The latter languages viewed as languages of synchronizing words were considered in \cite{PribR1} and  \cite{PribR2}.

In section 2 we answer the uniqueness question that was posed in \cite{SOFSEM}. The question is whether the uniqueness of an MSA takes place within the class of strongly connected automata. The answer is negative. For the language $L=\Sigma^{*}a^{n-1}b\Sigma^{*}$ there exist two different strongly connected automata with $n+1$ states over $\Sigma=\{a,b\}$ yielding the minimum of reset complexity for $L$.

In section 3 we provide an algorithm to construct a strongly connected synchronizing automaton whose language of synchronizing words is generated by a single word.

In section 4 we consider some algebraic properties of principal ideal languages. In particular, we establish the connection between the syntactic semigroup of such a language and the transition semigroup of a synchronizing automaton for which this language serves as the language of reset words. Also, we find a compact formula for calculating the syntactic complexity of a principal ideal language.

\section{On uniqueness question of an MSA}

It is well-known that the minimal automaton $\mathrsfs{A}_L$ recognizing a given language $L$ is unique up to isomorphism. The same fact does not hold for an MSA, see \cite{SOFSEM}. But the question, whether the uniqueness takes place within the class of strongly connected automata, remained open. Here we answer this question in the negative.

Consider the language $L=\Sigma^{*}a^{n-1}b\Sigma^{*}$ over $\Sigma=\{a,b\}$. In \cite{SOFSEM} it was shown that $rc(L)=n+1$. We present two different strongly connected automata with $n+1$ states for which $L$ serves as the language of synchronizing words. For clarity the corresponding automata ${\mathrsfs A}_{6}$ and ${\mathrsfs B}_{6}$ with six states are shown on Fig.~\ref{fig Z6}.

\begin{figure}[ht]
\begin{center}
\unitlength=3pt
\begin{picture}(94,30)
   \gasset{Nw=6,Nh=6,Nmr=3}
\thinlines
\node(A)(0,18){$5$}
\node(B)(17,31){$1$}
\node(C)(34,18){$2$}
\node(D)(29,0){$3$}
\node(E)(5,0){$4$}
\node(F)(17,15){$0$}
\drawedge(A,B){$b$}
\drawedge(B,C){$a$}
\drawedge(C,D){$a$}
\drawedge(D,E){$a$}
\drawedge(E,A){$a$}
\drawedge(C,F){$b$}
\drawedge(D,F){$b$}
\drawedge(E,F){$b$}
\drawloop[loopdiam=6,loopangle=180](A){$a$}
\drawloop[loopdiam=6,loopangle=180](F){$a$}
\drawedge[curvedepth=3](F,B){$b$}
\drawedge[curvedepth=3](B,F){$b$}
\node(A1)(60,18){$5$}
\node(B1)(77,31){$1$}
\node(C1)(94,18){$2$}
\node(D1)(89,0){$3$}
\node(E1)(65,0){$4$}
\node(F1)(77,15){$0$}
\drawedge(A1,B1){$b$}
\drawedge[ELside=r](B1,C1){$a$}
\drawedge(C1,D1){$a$}
\drawedge(D1,E1){$a$}
\drawedge[curvedepth=3](E1,A1){$a$}
\drawedge[curvedepth=-3,ELside=r](C1,B1){$b$}
\drawedge(F1,B1){$a,b$}
\drawedge[curvedepth=3](B1,F1){$b$}
\drawedge(A1,E1){$a$}
\drawedge[ELside=r](D1,F1){$b$}
\drawedge[curvedepth=3](E1,B1){$b$}
\end{picture}
\end{center}
\caption{Automata ${\mathrsfs A}_{6}$ and ${\mathrsfs B}_{6}$.}
\label{fig Z6}
\end{figure}

The transition function $\delta$ of the first DFA $\mathrsfs{A}_{n+1}$ is defined as follows:
\begin{center}
$\delta(i,a)=\begin{cases}
    i+1 & \text{ if } 0< i< n,\\
    0 & \text{ if } i=0,\\
    n & \text{ if } i=n,
  \end{cases}
\qquad\delta(i,b)=\begin{cases}
    0 & \text{ if } 0< i< n,\\
    1 & \text{ if } i=0 \text{ or } i=n.
  \end{cases}$
\end{center}
Note that $\mathrsfs{A}_{n+1}$ is strongly connected. Indeed, the states $1,2,\ldots, n$ appear in a cycle marked by the word $a^{n-1}b$, furthermore  $0\dt b=1$ and $1\dt b=0$. Now we need to check, that the language of words synchronizing the automaton $\mathrsfs{A}_{n+1}$ coincides with $\Sigma^*a^{n-1}b\Sigma^*$. The standard tool for finding the language of synchronizing words of a given DFA 
$\mathrsfs{A}=\langle Q,\delta,\Sigma\rangle$ is the \emph{power automaton} $\mathcal{P}(\mathrsfs{A})$. Its state set is the set $\mathcal{Q}$ of all nonempty subsets of $Q$, and the transition function is defined as a natural extension of $\delta$ on the set $\mathcal{Q}\times\Sigma$ (the resulting function is also denoted by $\delta$), namely $\delta(S,a)=\{\delta(s,a)\mid s\in S\}$ for $S\subseteq Q$ and $a\in\Sigma$. 
The automaton $\mathcal{P}(\mathrsfs{A})$ recognizes $\Syn(\mathrsfs{A})$ provided one takes $Q$ as the initial state and singletons as final states. 
It is easy to see, that if all the singletons are identified to obtain unique sink state $s$, the resulting automaton still recognizes $\Syn(\mathrsfs{A})$. Throughout the paper the term \emph{power automaton} will refer to this modified version. The Fig.~\ref{figA1} shows the power automaton for the language $\mathrsfs{A}_{n+1}$ (for clarity only reachable from $Q$ subsets are shown). From the structure of $\mathcal{P}(\mathrsfs{A}_{n+1})$ it is easy to see that the language of synchronizing words of the automaton $\mathrsfs{A}_{n+1}$ coincides with $L$.

Next we consider the DFA $\mathscr{B}_{n+1}$ with $n+1=2k$ and transition function $\delta$ defined by the rule
\begin{center}
$\delta(i,a)=\begin{cases}
    i+1 & \text{ if } 0\leq i< n,\\
    n-1 & \text{ if } i=n,
  \end{cases}
\qquad\delta(i,b)=\begin{cases}
    0 & \text{ if } i \text{ is odd and } i\neq n,\\
    1 & \text{ if } i \text{ is even or } i=n.
  \end{cases}$
\end{center}
We verify that $\mathscr{B}_{n+1}$ is strongly connected. Indeed, the states 1,2,..., $n$ appear in a cycle marked by the word $a^{n-1}b$, furthermore $0\dt a=1$ and $1\dt b=0$. It is easily seen that, for any odd $n$, $\mathscr{B}_{n+1}$ and $\mathscr{A}_{n+1}$ are not isomorphic. For the power automaton $\mathcal{P}(\mathscr{B}_{2k})$ see the left side of Fig.~\ref{figB1} (again, only reachable from $Q$ subsets are shown).
\begin{figure}[ht]
\begin{center}
\unitlength=2pt
\begin{picture}(60,75)
\gasset{Nw=10,Nh=10,Nmr=5}
\thinlines
\node[Nadjust=wh,Nmr=1](A)(0,75){$0,1,2,...,n$}
\node[Nadjust=wh,Nmr=1](B)(0,60){$0,2,3,...,n$}
\node[Nframe=n](p0)(0,45){$\ldots$}
\node[Nadjust=wh,Nmr=1](D)(0,30){$0,n-1,n$}
\node[Nadjust=wh,Nmr=1](E)(0,15){$0,n$}
\node(f)(0,0){$s$}
\node[Nadjust=wh,Nmr=1](B1)(50,60){$0,1$}
\node[Nadjust=wh,Nmr=1](C1)(50,45){$0,2$}
\node[Nadjust=wh,Nmr=1](D1)(50,30){$0,3$}
\node[Nframe=n](p1)(50,15){$\ldots$}
\node[Nadjust=wh,Nmr=1](E1)(50,0){$0,n-1$}
\drawloop[loopdiam=8,loopangle=45](B1){$b$}
\drawloop[loopdiam=8,loopangle=180](E){$a$}
\drawloop[loopdiam=8,loopangle=180](f){$a,b$}
\drawedge(A,B){$a$}
\drawedge(B,p0){$a$}
\drawedge(p0,D){$a$}
\drawedge(D,E){$a$}
\drawedge(A,B1){$b$}
\drawedge(B,B1){$b$}
\drawedge(D,B1){$b$}
\drawedge(E,f){$b$}
\drawedge(B1,C1){$a$}
\drawedge(C1,D1){$a$}
\drawedge(D1,p1){$a$}
\drawedge(p1,E1){$a$}
\drawedge(E1,E){$a$}
\drawedge[curvedepth=-6,ELside=r](C1,B1){$b$}
\drawedge[curvedepth=-16,ELside=r](D1,B1){$b$}
\drawedge[curvedepth=10](E1,B1){$b$}
\end{picture}
\end{center}
\caption{The power automaton $\mathcal{P}(\mathscr{A}_{n+1})$}
\label{figA1}
\end{figure}
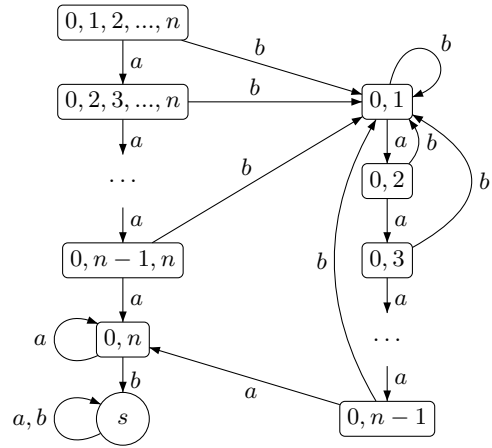
It remains to construct a series of strongly connected automata $\mathscr{B}_{n+1}$ with $n+1=2k+1$. The transition function $\delta$ is defined as follows:
\begin{center}
$\delta(i,a)=\begin{cases}
    i+1 & \text{ if } 1< i< n \text{ or } i=0,\\
    0 & \text{ if } i=1 \text{ or } i=n,
  \end{cases}
\qquad\delta(i,b)=\begin{cases}
    0 & \text{ if } 1< i \leq n,\\
    2 & \text{ if } i=0 \text{ or } i=1.
  \end{cases}$
\end{center}
\begin{figure}[ht]
\begin{center}
\unitlength=2pt
\begin{picture}(125,95)
\gasset{Nw=10,Nh=10,Nmr=5}
\thinlines
\node[Nadjust=wh,Nmr=1](A)(0,90){$0,1,2,...,n$}
\node[Nadjust=wh,Nmr=1](B)(0,75){$1,2,3,...,n$}
\node[Nadjust=wh,Nmr=1](C)(0,60){$2,3,4,...,n$}
\node[Nframe=n](p0)(0,45){$\ldots$}
\node[Nadjust=wh,Nmr=1](D)(0,30){$n-2,n-1,n$}
\node[Nadjust=wh,Nmr=1](E)(0,15){$n-1,n$}
\node(f)(0,0){$s$}
\node[Nadjust=wh,Nmr=1](B1)(40,80){$0,1$}
\node[Nadjust=wh,Nmr=1](C1)(40,60){$1,2$}
\node[Nadjust=wh,Nmr=1](D1)(40,45){$2,3$}
\node[Nframe=n](p1)(40,30){$\ldots$}
\node[Nadjust=wh,Nmr=1](E1)(40,15){$n-2,n-1$}
\drawloop[loopdiam=8,loopangle=45](B1){$b$}
\drawloop[loopdiam=8,loopangle=180](E){$a$}
\drawloop[loopdiam=8,loopangle=180](f){$a,b$}
\drawedge(A,B){$a$}
\drawedge(B,C){$a$}
\drawedge(C,p0){$a$}
\drawedge(p0,D){$a$}
\drawedge(D,E){$a$}
\drawedge(A,B1){$b$}
\drawedge(B,B1){$b$}
\drawedge(C,B1){$b$}
\drawedge(D,B1){$b$}
\drawedge(E,f){$b$}
\drawedge(B1,C1){$a$}
\drawedge(C1,D1){$a$}   
\drawedge(D1,p1){$a$}
\drawedge(p1,E1){$a$}
\drawedge(E1,E){$a$}
\drawedge[curvedepth=-8,ELside=r](C1,B1){$b$}
\drawedge[curvedepth=-16,ELside=r](D1,B1){$b$}
\drawedge[curvedepth=10](E1,B1){$b$}
\node[Nadjust=wh,Nmr=1](A2)(82,90){$0,1,2,...,n$}
\node[Nadjust=wh,Nmr=1](B2)(82,75){$0,1,3,...,n$}
\node[Nadjust=wh,Nmr=1](C2)(82,60){$0,1,4,...,n$}
\node[Nframe=n](p02)(82,45){$\ldots$}
\node[Nadjust=wh,Nmr=1](D2)(82,30){$0,1,n$}
\node[Nadjust=wh,Nmr=1](E2)(82,15){$0,1$}
\node(f2)(82,0){$s$}
\node[Nadjust=wh,Nmr=1](B12)(122,80){$0,2$}
\node[Nadjust=wh,Nmr=1](C12)(122,60){$1,3$}
\node[Nadjust=wh,Nmr=1](D12)(122,45){$0,4$}
\node[Nframe=n](p12)(125,30){$\ldots$}
\node[Nadjust=wh,Nmr=1](E12)(125,15){$0,n$}
\drawloop[loopdiam=8,loopangle=45](B12){$b$}
\drawloop[loopdiam=8,loopangle=180](E2){$a$}
\drawloop[loopdiam=8,loopangle=180](f2){$a,b$}
\drawedge(A2,B2){$a$}
\drawedge(B2,C2){$a$}
\drawedge(C2,p02){$a$}
\drawedge(p02,D2){$a$}
\drawedge(D2,E2){$a$}
\drawedge(A2,B12){$b$}
\drawedge(B2,B12){$b$}
\drawedge(C2,B12){$b$}
\drawedge(D2,B12){$b$}
\drawedge(E2,f2){$b$}
\drawedge(B12,C12){$a$}
\drawedge(C12,D12){$a$}
\drawedge(D12,p12){$a$}
\drawedge(p12,E12){$a$}
\drawedge(E12,E2){$a$}
\drawedge[curvedepth=-8,ELside=r](C12,B12){$b$}
\drawedge[curvedepth=-16,ELside=r](D12,B12){$b$}
\drawedge[curvedepth=10](E12,B12){$b$}
\end{picture}
\end{center}
\caption{The power automata $\mathcal{P}$($\mathscr{B}_{2k}$) and $\mathcal{P}$($\mathscr{B}_{2k+1}$)}
\label{figB1}
\end{figure}
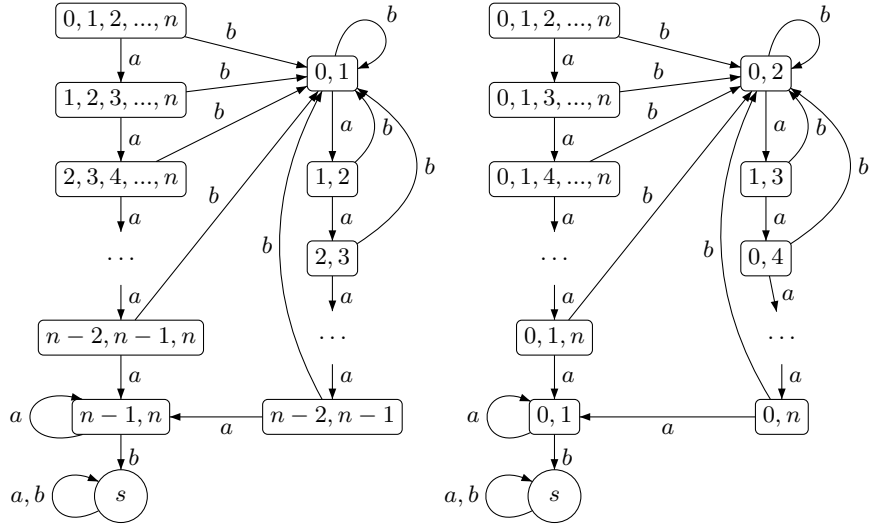
The states 2,3,...,$n$, 0 form a cycle marked by the word $a^{n-1}b$, furthermore $0\dt a=1$ and $1\dt a=0$. It is easily seen that, for any even $n$, $\mathscr{B}_{n+1}$ and $\mathscr{A}_{n+1}$ are not isomorphic. The power automaton $\mathcal{P}(\mathrsfs{B}_{2k+1})$ consisting only of reachable from $Q$ subsets is on the right side of Fig.~\ref{figB1}. From the structure of the power automaton for $\mathscr{B}_{n+1}$ it easily follows that its set of synchronizing words coincides with $L$.

\section{Algorithm}

\subsection{Formal description}

In this section we provide an algorithm to construct a strongly connected synchronizing automaton whose language of synchronizing words is generated by a single word $w$. The minimal automaton recognizing this language is denoted by $\mathrsfs{A}_w$. The main idea of the construction is the following. We try to construct a strongly connected automaton such that in its \emph{pair automaton} there is a subautomaton isomorphic to $\mathrsfs{A}_w$. 
Our algorithm can be applied in case of an arbitrary alphabet, but for clarity we explain it only in binary case.

Recall, that the pair automaton of a given DFA $\mathrsfs{A}=\langle Q,\Sigma,\delta\rangle$ is the subautomaton $\mathcal{P}^{[2]}(\mathrsfs{A})$ of the power automaton $\mathcal{P}(\mathrsfs{A})$ consisting only of 2-element subsets of $Q$ and the sink state $s$.

 Fix a word $w$ over $\Sigma=\{a,b\}$. Let $|w|=n$. Without loss of generality suppose that the first letter of $w$ is $a$. Denote the $i$-th letter of $w$ by $w[i]$ and the prefix $w[1]w[2]...w[i]$ by $w[1..i]$. For any letter $x\in\{a,b\}$ by $\overline{x}$ we denote its \emph{complementary} letter, i.e. $\overline{a}=b$, and $\overline{b}=a$.
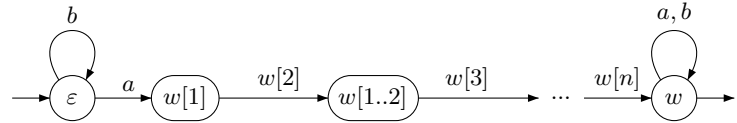
\begin{figure}[ht]
\begin{center}
  \begin{picture}(95,7)
   \gasset{Nw=6,Nh=6,Nmr=3}
   \thinlines
   \node[Nmarks=i](A1)(10,0){$\varepsilon$}
   \node[Nw=9](A2)(25,0){$w[1]$}
   \node[Nw=12](A3)(50,0){$w[1..2]$}
   \node[Nmarks=f](A4)(90,0){$w$}
   \node[Nframe=n](A5)(75,0){$...$}
   \drawloop[loopdiam=6,loopangle=90](A4){$a,b$}
   \drawloop[loopdiam=6,loopangle=90](A1){$b$}
   \drawedge(A1,A2){$a$}
   \drawedge(A2,A3){$w[2]$}
   \drawedge(A5,A4){$w[n]$}
   \drawedge(A3,A5){$w[3]$}
   \end{picture}
\end{center}
\caption{The minimal DFA $\mathscr{A}_w$.}
\label{min_aut}
\end{figure}
Let us remind the construction of the minimal automaton recognizing the language $\Sigma^{*}w\Sigma^{*}$. It is well-known that this automaton has $n+1$ states. We enumerate the states of this automaton by the prefixes of the word $w$ so that the state $w[1..i]$ maps to the state $w[1..i+1]$ under the action of the letter $w[i+1]$ for all $i$, $0\leq i<n$. The other letter $\overline{w[i+1]}$ maps the state $w[1..i]$ to the state $p$ such that $p$ is the maximal prefix of $w$ that appears in the word $w[1..i+1]$ as a suffix. The state $w$ is the sink state.
The initial state is $\varepsilon$ and the unique final state is $w$, see Fig.\ref{min_aut} (the transitions labeled by complementary letters $w[i]$ are not shown).

The algorithm constructing a required strongly connected synchronizing automaton $\mathrsfs{B}$ with the state set  $Q=\{0,1,\ldots,n\}$ proceeds inductively. On the first step we put $Q=\{0,1,2\}$ and define the action of letters on the states $0$ and $1$. On the $i^\text{th}$ step ($1<i<n$) 
we add new state $i+1$ to $Q$ and define the transition function on the state $i$. On the last, $n^\text{th}$ step we define the transition function on the state $n$.
 Transitions on each step are defined in such a way, that after the $i^\text{th}$ step of the algorithm ($1\le i \le n$)  the current pair automaton has a subautomaton isomorphic to the part of the minimal automaton $\mathrsfs{A}_w$ consisting of the states $\varepsilon, w[1],\ldots, w[1..i]$.

Consider the first step of the algorithm.
We need to associate the states $\varepsilon$ and $w[1]$ of $\mathscr{A}_w$ with 2-element subsets $\{p_i,q_i\}$ of $Q=\{0,1,2\}$. Without loss of generality we associate the state $\varepsilon$ with the subset $\{0,1\}$, and the state $w[1]$ with the subset $\{1,2\}$. In the automaton $\mathscr{A}_w$ the state $\varepsilon$ is fixed by $b$ and maps to $w[1]$ under the action of $a$. Define in $\mathscr{B}$ the transition function on $0$ and $1$ in such a way that the subset $\{0,1\}$ in $\mathscr{B}$ is fixed under the action of $b$, and maps to the subset $\{1,2\}$ under the action of  $a$. We have four different ways to do so (see Fig.\ref{buildB}).
\begin{figure}[ht]
\begin{center}
  \begin{picture}(105,20)
   \gasset{Nw=6,Nh=6,Nmr=3}
   \thinlines
   \node(A1)(0,15){$0$}
   \node(A2)(15,15){$1$}
   \node(A3)(15,0){$2$}
   \drawloop[loopdiam=6](A2){$a$}
   \drawedge[curvedepth=3](A1,A2){$b$}
   \drawedge(A2,A1){$b$}
   \drawedge(A1,A3){$a$}
   \node(B1)(30,15){$0$}
  \node(B2)(45,15){$1$}
   \node(B3)(45,0){$2$}
   \drawloop[loopdiam=6,loopangle=90](B1){$b$}
   \drawloop[loopdiam=6,loopangle=90](B2){$a,b$}
   \drawedge(B1,B3){$a$}
   \node(C1)(60,15){$0$}
   \node(C2)(75,15){$1$}
   \node(C3)(75,0){$2$}
   \drawedge(C2,C3){$a$}
   \drawloop[loopdiam=6](C1){$b$}
   \drawloop[loopdiam=6](C2){$b$}
   \drawedge(C1,C2){$a$}
   \node(D1)(90,15){$0$}
   \node(D2)(105,15){$1$}
   \node(D3)(105,0){$2$}
   \drawedge[curvedepth=3](D1,D2){$a,b$}
   \drawedge(D2,D1){$b$}
   \drawedge(D2,D3){$a$}
   \end{picture}
\end{center}
\caption{Possible transitions from 0 and 1 in the automaton $\mathscr{B}$.}
\label{buildB}
\end{figure}
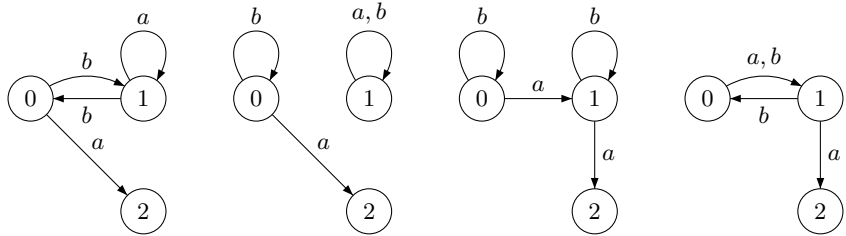
 It is easy to see that in fact the second case is impossible since the DFA $\mathscr{B}$ will not be strongly connected, not even during the rest of the construction. For certainty consider the first variant, so $0\dt a=2$, $0\dt b=1$, $1\dt a=1$, and $1\dt b=0$.
%

Let us describe the $i^\text{th}$ step of the algorithm. We have $Q=\{0,1,\ldots, i\}$. For convenience we represent the subsets $\{p_i,q_i\}\ne\{0,1\}$ of $Q$ as ordered pairs $(p_i,q_i)$ with $p_i>q_i$, and the subset $\{0,1\}$ as pair $(0,1)$. On previous steps the states $\varepsilon,w[1],\ldots,w[1..i-1]$ of $\mathscr{A}_w$ were associated with the pairs $(p_0,q_0)$, $(p_1,q_1)$, \ldots, $(p_{i-1},q_{i-1})$ in $\mathcal{P}^{[2]}(\mathrsfs{B})$ in such a way that $p_0=0,p_1=2,p_2=3$,$\ldots, p_{i-1}=i$ and the transition function on the states $0,1,...,i-1$ was defined (see the dash-dotted part on Fig.~\ref{step_i}). We add the state $i+1$ to $Q$. Next we associate the state $w[1..i]$ of $\mathscr{A}_w$ with the pair $(i+1,q_i)$, where $q_i=q_{i-1}\dt w[i]$, and put $i\dt w[i]=i+1$. 
It remains to define the transition $i\dt \overline{w[i]}$. Let $w[1..j]=w[1..i-1]\dt \overline{w[i]}$, and let $(p_j,q_j)$ be the associated pair of states in $\mathcal{P}^{[2]}(\mathrsfs{B})$. In the correctness section we will show that one of the equalities holds: either $q_{i-1}\dt \overline{w[i]}=q_j$ or $q_{i-1}\dt\overline{w[i]}=p_j$. If $q_{i-1}\dt \overline{w[i]}=q_j$, then we put $i\dt \overline{w[i]}=p_j$, otherwise, $i\dt \overline{w[i]}=q_j$. The $i^\text{th}$ step is illustrated on Fig.~\ref{step_i}. The left part of the picture corresponds to the current pair automaton (only states currently associated to the states of $\mathscr{A}_w$ are shown), the right part is the corresponding subautomaton of $\mathscr{A}_w$. The correspondence between states of $\mathscr{A}_w$ and those of pair automaton is shown in dashed lines. The transitions defined on this step are shown in thick lines.
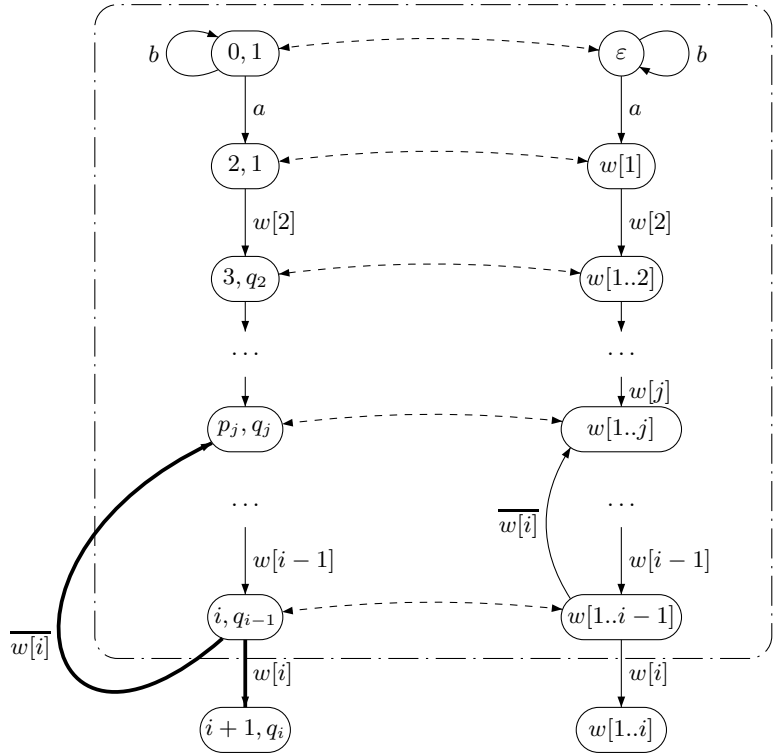
\begin{figure}[ht]
\begin{center}
  \begin{picture}(50,95)
   \gasset{Nw=6,Nh=6,Nmr=3}
   \thinlines
   \node(A1)(50,90){$\varepsilon$}
   \node[Nw=9](A2)(50,75){$w[1]$}
   \node[Nw=11](A3)(50,60){$w[1..2]$}
   \node[Nframe=n](A4)(50,50){$\ldots$}
   \node[Nw=16](A5)(50,40){$w[1..j]$}
   \node[Nframe=n](A6)(50,30){$\ldots$}
   \node[Nw=16](A7)(50,15){$w[1..i-1]$}
   \node[Nw=12](A8)(50,0){$w[1..i]$}
   \drawloop[loopdiam=6,loopangle=0](A1){$b$}
   \drawedge(A1,A2){$a$}
   \drawedge(A2,A3){$w[2]$}
   \drawedge(A3,A4){}
   \drawedge(A4,A5){$w[j]$}
   \drawedge(A6,A7){$w[i-1]$}
   \drawedge(A7,A8){$w[i]$}
   \drawedge[curvedepth=5,sxo=-5,exo=-5](A7,A5){$\overline{w[i]}$}
   \node[Nw=9](B1)(0,90){$0,1$}
   \node[Nw=9](B2)(0,75){$2,1$}
   \node[Nw=9](B3)(0,60){$3,q_2$}
   \node[Nframe=n](B4)(0,50){$\ldots$}
   \node[Nw=10](B5)(0,40){$p_j,q_j$}
   \node[Nframe=n](B6)(0,30){$\ldots$}
   \node[Nw=10](B7)(0,15){$i,q_{i-1}$}
   \node[Nw=12](B8)(0,0){$i+1,q_i$}
   \drawloop[loopdiam=6,loopangle=180](B1){$b$}
   \drawedge(B1,B2){$a$}
   \drawedge(B2,B3){$w[2]$}
   \drawedge(B3,B4){}
   \drawedge(B4,B5){}
   \drawedge(B6,B7){$w[i-1]$}
   \drawedge[linewidth=0.5](B7,B8){$w[i]$}
   \drawbpedge[linewidth=0.5](B7,225,40,B5,200,40){$\overline{w[i]}$}
   \drawedge[ATnb=1,curvedepth=-2,dash={1}0](A1,B1){}
   \drawedge[ATnb=1,curvedepth=-2,dash={1}0](A2,B2){}
   \drawedge[ATnb=1,curvedepth=-2,dash={1}0](A3,B3){}
   \drawedge[ATnb=1,curvedepth=-2,dash={1}0](A5,B5){}
   \drawedge[ATnb=1,curvedepth=-2,dash={1}0](A7,B7){}
   \node[dash={4 1 0.3 1}0, Nw=90,Nh=87](C)(25,53){}
   \end{picture}
\end{center}
\caption{Step $i$: Current pair automaton, and the corresponding part of $\mathrsfs{A}_w$.}
\label{step_i}
\end{figure}

On the last $n^{\text{th}}$ step we map the state $n$ of $\mathrsfs{B}$ to the state $q_{n-1}\dt w[n]$ under the action of the letter $w[n]$ in order to associate the state $w$ of $\mathrsfs{A}_w$ with the sink state $s$ of $\mathcal{P}^{[2]}(\mathrsfs{B})$. The action of the letter $\overline{w[n]}$ is defined as before.

Obviously the language consisting of words synchronizing the subset $\{0,1\}$ coincides with $\Sigma^*w\Sigma^*$. Since the set of words synchronizing the whole automaton $\mathrsfs{B}$ is contained in the set of words synchronizing any of its subsets, we have $\Syn(\mathrsfs{B})\subseteq\Sigma^*w\Sigma^*$. Let
us show that the word $w$ synchronizes $\mathrsfs{B}$. Let $0\dt w=1\dt w=m\in Q$. We proceed inductively. 
Assume that for any $0\le k< i$ we have $k\dt w=m$. Let us show that $i\dt w=m$. The state $i$ belongs to the pair $(i,q)$ in the subautomaton of $\mathcal{P}^{[2]}(\mathrsfs{B})$ isomorphic to $\mathrsfs{A}_w$. Thus, the pair $(i,q)$
is synchronized by the word $w$. Since $q<i$, by induction hypothesis we have $q\dt w=m$, therefore $i\dt w=m.$
Finally, we have $n\dt w=m$, so $w\in\Syn(\mathrsfs{B})$. Hence $\Sigma^*w\Sigma^*\subseteq\Syn(\mathrsfs{B})$.


\begin{example}
We apply our algorithm to the word $aabab$. First, build the minimal automaton recognizing $\Sigma^{*}aabab\Sigma^{*}$ (see Fig.~\ref{aabab}).
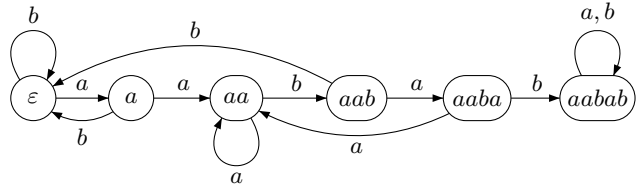
\begin{figure}[ht]
\begin{center}
\begin{picture}(75,12)
   \gasset{Nw=6,Nh=6,Nmr=3}
\thinlines
\node(A)(0,5){$\varepsilon$}
\node(B)(13,5){$a$}
\node[Nw=7](C)(27,5){$aa$}
\node[Nw=8](D)(43,5){$aab$}
\node[Nw=9](E)(59,5){$aaba$}
\node[Nw=10](F)(75,5){$aabab$}
\drawedge(A,B){$a$}
\drawedge(B,C){$a$}
\drawedge(C,D){$b$}
\drawedge(D,E){$a$}
\drawedge(E,F){$b$}
\drawloop[loopdiam=6,loopangle=90](A){$b$}
\drawloop[loopdiam=6,loopangle=-90](C){$a$}
\drawloop[loopdiam=6,loopangle=90](F){$a,b$}
\drawedge[curvedepth=3](B,A){$b$}
\drawedge[curvedepth=-7,ELside=r](D,A){$b$}
\drawedge[curvedepth=5](E,C){$a$}
\end{picture}
\end{center}
\caption{The minimal DFA recognizing $\Sigma^{*}aabab\Sigma^{*}$.}
\label{aabab}
\end{figure}
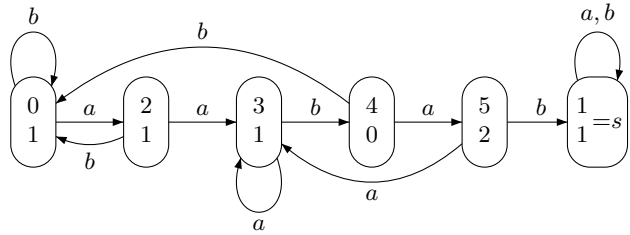
\begin{figure}[ht]
\begin{center}
\begin{picture}(75,22)
   \gasset{Nw=6,Nh=12,Nmr=3}
\thinlines
\node(A)(0,10){$\begin{array}{c}
                   0 \\
                   1
                 \end{array}$}
\node(B)(15,10){$\begin{array}{c}
                   2 \\
                   1
                 \end{array}$}
\node(C)(30,10){$\begin{array}{c}
                   3 \\
                   1
                 \end{array}$}
\node(D)(45,10){$\begin{array}{c}
                   4 \\
                   0
                 \end{array}$}
\node(E)(60,10){$\begin{array}{c}
                   5 \\
                   2
                 \end{array}$}
\node[Nw=8](F)(75,10){$\begin{array}{c}
                   1 \\
                   1
                 \end{array}$=$s$}
\drawedge(A,B){$a$}
\drawedge(B,C){$a$}
\drawedge(C,D){$b$}
\drawedge(D,E){$a$}
\drawedge(E,F){$b$}
\drawloop[loopdiam=6,loopangle=90](A){$b$}
\drawloop[loopdiam=6,loopangle=-90](C){$a$}
\drawloop[loopdiam=6,loopangle=90](F){$a,b$}
\drawedge[curvedepth=3](B,A){$b$}
\drawedge[curvedepth=-10,ELside=r](D,A){$b$}
\drawedge[curvedepth=8](E,C){$a$}
\end{picture}
\end{center}
\caption{The corresponding subautomaton in the pair automaton $\mathcal{P}^{[2]}(\mathrsfs{B})$}
\label{aabab_pair}
\end{figure}
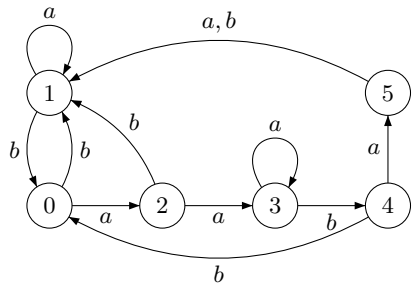
\begin{figure}[ht]
\begin{center}
\begin{picture}(55,30)
   \gasset{Nw=6,Nh=6,Nmr=3}
\thinlines
\node(A)(0,5){$0$}
\node(B)(0,20){$1$}
\node(C)(15,5){$2$}
\node(D)(30,5){$3$}
\node(E)(45,5){$4$}
\node(F)(45,20){$5$}
\drawedge[curvedepth=-3,ELside=r](A,B){$b$}
\drawedge[curvedepth=-3,ELside=r](B,A){$b$}
\drawedge[ELside=r](A,C){$a$}
\drawedge[ELside=r](C,D){$a$}
\drawedge[ELside=r](D,E){$b$}
\drawedge[curvedepth=-7,ELside=r](F,B){$a,b$}
\drawedge[curvedepth=-3,ELside=r](C,B){$b$}
\drawedge(E,F){$a$}
\drawloop[loopdiam=6,loopangle=90](D){$a$}
\drawloop[loopdiam=6,loopangle=90](B){$a$}
\drawedge[curvedepth=7](E,A){$b$}
\end{picture}
\end{center}
\caption{Strongly connected synchronizing automaton $\mathscr{B}$ with $Syn(\mathscr{B})=\Sigma^{*}aabab\Sigma^{*}$.}
\label{aababB}
\end{figure}
Next we show in details the construction of the DFA $\mathscr{B}$.

\noindent \emph{Step 1.} We have $Q=\{0,1,2\}$, and $0\dt a=2$, $0\dt b=1$, $1\dt a=1$, and $1\dt b=0$.

\noindent \emph{Step 2.} Add state $3$ to $Q$ and put $2\dt a=3$. Associate
the state $aa$ of $\mathrsfs{A}_w$ with the pair $(3,q_2)$, where $q_2=1\dt a=1$. To define $2\dt b$ we see that the condition $\{1,2\}\dt b=\{0,1\}$
must be satisfied. Since $1\dt b=0$, we put $2\dt b=1$.

\noindent \emph{Step 3}. Add state $4$ to $Q$ and put $3\dt b=4$. Associate
the state $aab$ of $\mathrsfs{A}_w$ with the pair $(4,q_3)$, where $q_3=1\dt b=0$. 
To define $3\dt a$ we see that the condition $\{1,3\}\dt a=\{1,3\}$
must be satisfied. Since $1\dt a=1$, we put $3\dt a=3$.

\noindent \emph{Step 4}. Add state $5$ to $Q$ and put $4\dt a=5$. Associate
the state $aaba$ of $\mathrsfs{A}_w$ with the pair $(5,q_4)$, where $q_4=0\dt a=2$. 
To define $4\dt b$ we see that the condition $\{0,4\}\dt b=\{0,1\}$
must be satisfied. Since $0\dt b=1$, we put $4\dt b=0$.

\noindent \emph{Step 5}. This is the last step of the algorithm. We do not add any new states,
only define the transition function on the state $5$. We put $5\dt b=2\dt b=1$
in order to have the pair $(5,2)$ associated with the sink state of the pair automaton.
Since the condition $\{2,5\}\dt a=\{1,3\}$
must be satisfied, and $2\dt a=3$, we put $5\dt a=1$.
The resulting pairs associated with the states of $\mathscr{A}_w$ are shown on Fig.\ref{aabab_pair}.
The corresponding strongly connected DFA $\mathscr{B}$ is shown on Fig.~\ref{aababB}.
\end{example}

\subsection{Correctness}

Now we prove the correctness of the algorithm. The proof consists of two stages. First we verify that there will be no conflict while defining the action of letters on the states of $\mathrsfs{B}$. Next we show that the resulting automaton $\mathrsfs{B}$ is strongly connected.

\textbf{Stage 1.} Consider the word $a^n$. Construct the DFA $\mathrsfs{B}$ using the algorithm. Its transition function is defined as follows:
\begin{center}
$\delta_\mathrsfs{B}(i,a)=\begin{cases}
    i+1 & \text{ if } 1< i< n;\\
    1 & \text{ if } i=1 \text{ or } i=n;\\
    2 & \text{ if } i=0.
  \end{cases} \qquad \delta_\mathrsfs{B}(i,b)=\begin{cases}
    1 & \text{if } i=0 \text{ or }1<i\leq n;\\
    0 & \text{if } i=1.
  \end{cases}$
\end{center}
Obviously, the automaton is strongly connected and all assignments are correct. Thus we assume that the word $w$ contains $b$, i.e. $w=a^{k}bv$, $v\in\Sigma^*$. We need to show that on $i^\text{th}$ step of the algorithm either $q_{i-1}\dt \overline{w[i]}=p_j$ or $q_{i-1}\dt \overline{w[i]}=q_j$, where
$(p_j,q_j)$ is the pair associated with the prefix $w[1..j]$ of $w$.
Since $w=a^{k}bv$,  the states $w[1],w[1..2],\ldots,w[1..k-1]$ in $\mathscr{A}_w$ are mapped under the action of $b$ to $\varepsilon$. By the construction of $\mathrsfs{B}$ the state $\varepsilon$ is associated with the pair $(0,1)$. Thus,
the letter $b$ maps the states $p_0,p_1,\ldots,p_{k-1}$ to 0 or 1.  
Since the state $1$ is fixed by the letter $a$, we have that $q_0=1$, $q_1=1$,...,$q_k=1$. Since the letter $b$ maps the state $1$ to $0$, we have $q_{k+1}=0$. Note that $p_{\ell-1}$ is equal to the length of the prefix $w[1..\ell]$ of $w$. If $v$ and $w$ have common non-empty prefix, then $q_{k+2}=2$, $q_{k+3}=3$, etc. So the sequence of $q$'s consists of blocks of 1's and possibly blocks $0,2,3...$ corresponding to the length of some prefix. Let $u$ be the maximal suffix of the word $w[1..i-1]$ which appears in this word as a prefix. Then in the pair $(p_{i-1},q_{i-1})$ the state $q_{i-1}$ is either $p_{|u|}$ or $q_{|u|}$. It is clear that in $\mathscr{A}_w$ the state $|u|$ under the action of $\overline{w[i]}$ maps to $j$. Thus we can define the action of $\overline{w[i]}$ on the state $p_{i-1}$ to make the pair $(p_{i-1},q_{i-1})$ map to $(p_j,q_j)$. Indeed, if $q_{i-1}=p_{|u|}$ then $p_{i-1}\dt \overline{w[i]}=q_{|u|}\dt\overline{w[i]}$, otherwise put $p_{i-1}\dt \overline{w[i]}=p_{|u|}\dt\overline{w[i]}$.

\textbf{Stage 2.} Now we prove that the resulting automaton $\mathrsfs{B}$ is strongly connected.
Denote by $\mathrsfs{A}'_w$ the minimal automaton $\mathrsfs{A}_w$ after deleting the sink state. Let us assume first that $\mathrsfs{A}'_w$ is strongly connected. Let $i$ be an arbitrary state in the automaton $\mathrsfs{B}$. By assumption there is a path in $\mathcal{P}^{[2]}(\mathrsfs{B})$
from the pair $(i,q_{i-1})$ (associated with some state in $\mathrsfs{A}_w$) to the pair $(0,1)$ (associated with the state $\varepsilon$). Hence, either $0$ or $1$ is reachable from the state $i$. In fact both states $0$
and $1$ are reachable from $i$, since $0\dt b=1$ and $1\dt b=0$. By the construction the pair $(i,q_{i-1})$ is reachable from the pair
$(0,1)$, so the state $i$ is reachable both from $0$ and $1$ in $\mathrsfs{B}$. Thus, in this case the automaton $\mathrsfs{B}$ is strongly connected.

 Let us prove that if $w\notin\{a^{n-1}b, ab^{n-1}\}$, then the automaton $\mathrsfs{A}'_w$ is strongly connected. Consider an arbitrary state $w[1..i]$
 of $\mathrsfs{A}'_w$. This state is obviously reachable from the state $\varepsilon$. We show that the state $\varepsilon$ is reachable from $w[1..i]$.
 Let $w[1..j]=w[1..i]\dt c^{|w|}$, where $c=\overline{w[i+1]}$. The state $w[1..j]$ is a maximal prefix of $w$ of the form $c^k$. If $c=b$,
 then $w[1..j]=\varepsilon$, so we are done. Suppose $c=a$. Apply $b$ to the state $w[1..j]$ (this is possible, since $w\ne a^{n-1}b$). Next we apply $\overline{w[j+2]}$. If $\overline{w[j+2]}=b$, then $w[1..j]\dt bb =\varepsilon$. If $\overline{w[j+2]}=a$, then
 $w[1..j]\dt bab =\varepsilon$ (since $w\ne ab^{n-1}$). So in both cases the state $\varepsilon$ is reachable from $w[1..i]$.

It remains to apply the algorithm to $w=a^{n-1}b$ and $w=ab^{n-1}$ to make sure that the resulting automata in this case are also strongly connected. Indeed, in case $w=a^{n-1}b$ the transitions of the automaton $\mathscr{B}$ are defined as follows:
\begin{center}
$\delta_\mathscr{B}(i,a)=\begin{cases}
    i+1 & \text{ if } 1< i< n,\\
    i & \text{ if } i=1 \text{ if } i=n,\\
    2 & \text{ if } i=0,
  \end{cases}
\qquad\delta_\mathscr{B}(i,b)=\begin{cases}
    1 & \text{if } i=0 \text{ if }1<i<n,\\
    0 & \text{if } i=1 \text{ if } i=n.
  \end{cases}$
\end{center}
The states $0,2,\ldots,n$ form a cycle marked by the word $a^{n-1}b$, furthermore $0\dt b=1$ and $1\dt b=0$. Thus, $\mathscr{B}$ is strongly connected.

In case $w=ab^{n-1}$ the algorithm constructs the following DFA $\mathscr{B}$:
\begin{center}
$\delta_\mathscr{B}(i,a)=\begin{cases}
    1 & \text{ if } 0\le i\le n \text{ is odd},\\
    2 & \text{ if } 0\le i\le n \text{ is even},
  \end{cases}
\qquad\delta_\mathscr{B}(i,b)=\begin{cases}
    i+1 & \text{if } 0\le i<n, i\ne 1,\\
    0 & \text{if } i=1,
  \end{cases}$
\end{center}
  \begin{center}
$  \delta_\mathscr{B}(n,b)=\begin{cases}
    0 & \text{if } n \text{ is even},\\
    1 & \text{if } n \text{ is odd}.
  \end{cases}$
\end{center}
If $n$ is even, then the states $0,2,\ldots,n$ form a cycle marked by the word $ab^{n-1}$, if $n$ is odd, then then the states $0,2,\ldots,n,1$ form a cycle marked by the word $ab^n$. Thus, $\mathscr{B}$ is strongly connected.

\section{On syntactic semigroup of a principal ideal language}

In the previous section for each word $w$ of length $n$ we constructed a strongly connected synchronizing DFA $\mathrsfs{B}$ with $n+1$ states such that $\Syn(\mathrsfs{B})=\Sigma^*w\Sigma^*$. Is it possible to construct such a DFA with less than $n+1$ states? In \cite{SOFSEM} it was shown that for the case $w\in\{a^n,b^n,a^{n-1}b\}$ we have $rc(L_w)=n+1$, where $L_w=\Sigma^{*}w\Sigma^{*}$ and $|w|=n$. But in general this question remains open. Nevertheless computer experiments show that the answer seems to be negative, i.e. the minimal in terms of reset complexity strongly connected synchronizing DFA has $n+1$ states. Another observation concerns the structure of that DFA. Even for a word $w$ of length 3 there may be several non-isomorphic strongly connected synchronizing automata yielding the minimum of reset complexity. But, as experiments show, the \emph{transition semigroups} of all these automata have the same algebraic structure. In this regard it is interesting to study the structure of the transition semigroup of a synchronizing automaton for which given ideal language serves as the language of synchronizing words.

For an ideal language $L\subseteq \Sigma^{*}$ the \emph{Myhill conguence} \cite{Myhill} $\thickapprox_L$ of $L$ is defined as follows:$$u\thickapprox_L \text{ if and only if }xuy\in L \Leftrightarrow xvy\in L \text{ for all }x,y\in\Sigma^{*}.$$ This congruence is also known as \emph{the syntactic congruence} of L. The quotient semigroup $\Sigma^{+}/\thickapprox_L$ of the relation $\thickapprox_L$ is called the \emph{syntactic semigroup} of L. 

\begin{proposition}
\label{HomImage}
Let $L$ be an ideal language, $S$ the syntactic semigroup of $L$ and $S(\mathrsfs{B})$ the transition semigroup of a synchronizing DFA $\mathrsfs{B}$ for which $L$ serves as the language of synchronizing words. Then $S$ is a homomorphic image of $S(\mathrsfs{B})$.
\end{proposition}

\begin{proof}
Take an arbitrary word $x\in \Sigma^{*}$. Let $[x]$ be the class of $x$ in $S$, and $\{x\}$ the class of $x$ in $S(B)$. Define the map $f:S(B)\rightarrow S$ by the rule $f(\{x\})=[x]$. Check that $f$ is a homomorphism. First, we check the correctness of defining $f$. Consider two words $u$ and $v$ from the same class in $S(B)$, it means that $\{u\}=\{v\}$. Show that in this case $[u]=[v]$. We need to check that for any $x,y\in \Sigma^{*}$ from $xuy\in L$ it follows that $xvy\in L$. By conditions of the proposition $L=Syn(B)$. Since $\{u\}=\{v\}$, then $u$ and $v$ generate the same transformations in the transition semigroup of $B$. But then either both $xuy$ and $xvy$ synchronize $B$, or both do not synchronize. Hence $[u]=[v]$. And finally, $f(\{u\}\{v\})=f(\{uv\})=[uv]=[u][v]=f(\{u\})f(\{v\})$. This completes the proof.
\qed
\end{proof}

Note that this proposition holds for every regular ideal language $L$. It means that if the transition semigroup of any DFA $\mathrsfs{B}$ such that
$\Syn(\mathrsfs{B})=L$ possesses some algebraic property which is preserved under homomorphisms, then also the syntactic semigroup of $L$
must possess this property. Thereby it is interesting to study the structure of the syntactic semigroup of an ideal language.
The \textit{syntactic complexity} $\sigma(L)$ of a regular language $L$ is the cardinality of its syntactic semigroup. The notion of syntactic complexity is studied quite extensively: for surveys of this topic and lists of references we refer the reader to \cite{BrzId,HolK3}. 
In \cite{BrzId} it was conjectured that in case of ideal languages $\sigma(L)\leq n^{n-2}+(n-2)2^{n-2}+1$, where $n$ is the state complexity of $L$. Also in \cite{BrzId} it was shown that there exists an ideal language of syntactic complexity $n^{n-2}+(n-2)2^{n-2}+1$. We consider partial case of principal ideal languages. Recall that $u\in\Sigma^{+}$ is an \textit{inner factor} of $w$ if there exist words $t,s\in\Sigma^{+}$ such that $w=tus$. Denote by $N(w)$ the number of different inner factors of $w$. We prove the following

\begin{theorem}
Let $w\not\in \{a^{n-1}b,ab^{n-1},ba^{n-1},b^{n-1}a\}$ and $L=\Sigma^{*}w\Sigma^{*}$, where $|w|=n$. Then $\sigma(L)=n^2+1+N(w)$.
\end{theorem}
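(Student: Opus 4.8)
My plan is to compute $\sigma(L)$ as the cardinality of the transition semigroup of the minimal automaton $\mathrsfs{A}_w$, since the syntactic semigroup of $L$ is isomorphic to that transition semigroup. Identifying the state $w[1..i]$ with the integer $i$, so that $Q=\{0,1,\dots,n\}$ with $n$ the sink, every nonempty word $u$ induces a transformation $\tau_u$ with $\tau_u(i)$ equal to the longest prefix of $w$ that is a suffix of $w[1..i]u$, and $\tau_u(i)=n$ if $w$ already occurs in $w[1..i]u$. Thus $\sigma(L)=|\{\tau_u\mid u\in\Sigma^+\}|$, and it suffices to enumerate these transformations. I would first isolate the obvious class: if $w$ is a factor of $u$ then $\tau_u$ is the constant map onto $n$ (reading $w$ from any state reaches the sink), giving the single \emph{zero} element and accounting for the summand $1$.

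For the remaining words I would prove the key structural dichotomy. Writing $r=\tau_u(0)$ for the suffix overlap of $u$ with $w$, an analysis of where an occurrence of $w$ can fall inside $w[1..i]u$ shows that $\tau_u(i)\in\{r,n\}$ unless there exist $g\ge 1$ and a prefix $w[1..g]$ which is a suffix of $w[1..i]$ with $u=w[g+1..g+|u|]$ and $g+|u|\le n-1$, in which case $\tau_u(i)=g+|u|$. Since $r\le|u|<g+|u|<n$, such a value lies strictly between $r$ and $n$. Hence $\tau_u$ takes a value outside $\{r,n\}$ exactly when $u$ is an \emph{inner factor} of $w$, and this yields a partition of the semigroup into the zero, the \emph{two-level} maps (image contained in $\{r,n\}$, arising from words that are neither inner factors nor contain $w$), and the \emph{advancing} maps arising from inner factors; the strict inequality guarantees the three families of transformations are pairwise disjoint.

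Next I would count the advancing maps. I claim $u\mapsto\tau_u$ is injective on inner factors, so their number is $N(w)$. Indeed, for distinct inner factors $u,v$ with $|u|\ge|v|$ I write $w=t\,u\,s$ with $t,s\in\Sigma^+$; then $tus=w\in L$, while $tvs$ has length at most $n$ and is not equal to $w$, so $tvs\notin L$, whence $u\not\approx_L v$. Since every advancing map comes from an inner factor, this family has exactly $N(w)$ elements.

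The main work, and the step I expect to be hardest, is to show that the two-level maps biject with the pairs $(\ell,r)\in\{0,\dots,n-1\}^2$, giving the summand $n^2$. Here $\ell$ is the length of the longest prefix of $u$ that is a suffix of $w$, and the set $S$ of states sent to the sink is $S=\{i\mid\exists\,a\in A(u),\ w[1..a]\text{ is a suffix of }w[1..i]\}$, where $A(u)=\{a\mid w[a+1..n]\text{ is a prefix of }u\}$. The crucial point is that $A(u)$ is determined by its smallest element $n-\ell$ alone: every other $a\in A(u)$ satisfies $a\ge n-\ell$ and is forced by the internal overlaps of $w$, since $w[a+1..n]$ must then be a prefix of the single string $w[(n-\ell)+1..n]$. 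Consequently $S$, and hence the whole two-level map $\tau_u$, depends only on $(\ell,r)$ (with $r=\tau_u(0)$ recoverable as the common non-sink value). It then remains to verify that this correspondence is a bijection onto $\{0,\dots,n-1\}^2$: that distinct $\ell$ give distinct sink-sets $S$ (injectivity) and that every pair $(\ell,r)$ is realised by some nonempty word that is not an inner factor (surjectivity). This is exactly where the hypothesis $w\notin\{a^{n-1}b,ab^{n-1},ba^{n-1},b^{n-1}a\}$ enters: for these degenerate words the two letters do not give enough freedom to realise all pairs, or two values of $\ell$ collapse onto the same sink-set, so the count drops below $n^2$. Establishing the bijection in the non-degenerate case, and checking that the four listed words are precisely the exceptions, is the technical heart of the argument. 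Combining the three disjoint families then yields $\sigma(L)=n^2+1+N(w)$.
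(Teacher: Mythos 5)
Your overall decomposition coincides with the paper's: the zero (words containing $w$), a family of $n^2$ transformations indexed by the overlap pair $(\ell,r)$, and one additional transformation per inner factor, summing to $n^2+1+N(w)$. The parts you actually prove are sound, and in places more careful than the paper's own write-up: the dichotomy (a transformation takes a value strictly between $r$ and $n$ precisely when the word is an inner factor), the pairwise disjointness of the three families, the injectivity on inner factors via $tus\in L$ but $tvs\notin L$, and the observation that a two-level map depends only on $(\ell,r)$ because $A(u)$ is determined by its least element $n-\ell$.

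However, there is a genuine gap, and it sits exactly where you placed it: you never establish that the correspondence $(\ell,r)\mapsto\tau_u$ is a bijection onto $\{0,\dots,n-1\}^2$. Injectivity is routine (if $\ell<\ell'$, the state $n-\ell'$ is sent to the sink under the map with overlap $\ell'$ but not under the one with overlap $\ell$), but surjectivity --- that every pair is realised by some word that neither contains $w$ nor is an inner factor --- is the quantitative core of the theorem; without it your argument yields only $\sigma(L)\le n^2+1+N(w)$, not the stated equality. The paper closes this by an explicit construction: take the suffix $s$ of $w$ with $|s|=\ell$, append $2n-\ell$ copies of a letter $\overline{x}$ different from the last letter of $s$ (different from the last letter of $w$ if $s=\varepsilon$), then $2n-r$ copies of a letter $\overline{y}$ different from the first letter of $w$, then the prefix $p$ with $|p|=r$. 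The long unary blocks pin the two overlaps to exactly $\ell$ and $r$, and an occurrence of $w$ inside this word is possible only if $w=\overline{x}^{\,l}\overline{y}^{\,m}$; the hypothesis $w\notin\{a^{n-1}b,ab^{n-1},ba^{n-1},b^{n-1}a\}$ guarantees $l,m>1$ in that residual case, and inserting $\overline{y}\,\overline{x}$ between the two blocks then destroys the unwanted occurrence. Some construction and verification of this kind is indispensable; merely noting that the hypothesis ``enters here'' does not supply it. (Also, the theorem needs only this one direction; checking that the four excluded words are precisely the exceptions is not required.)
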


\begin{proof}
Build the minimal automaton $\mathscr{A}_L$ recognizing $L$. We refer to words $u\in \Sigma^{*}$ as pairs $(s,p)$, where $s$ is the maximal suffix of $w$ that appears in $u$ as a prefix, and $p$ is the maximal prefix of $w$ that is also a suffix of $u$. For instance, consider the word $w=aabab$.  For the word $u=abbaba$ the corresponding pair is $(ab,a)$. Assume that $s\neq w$ and $p\neq w$.

First we show that the words corresponding to different pairs $(s_1,p_1)$ and $(s_2,p_2)$ define different transformations of the DFA $\mathscr{A}_L$. Indeed, if $p_1\neq p_2$, then the first (initial) state of $\mathscr{A}$ is mapped by this prefixes into different states ($p_1$ and $p_2$ respectively). Thus, the corresponding transformations act differently on the first state. If $s_1\neq s_2$, then without loss of generality assume that$|s_1|\leq |s_2|$. In this case there exists a prefix-state, which $s_2$ maps to the terminal state, and $s_1$ does not. Thus, again the corresponding transformations act differently on this state. If in pairs $(s_1,p_1)$ and $(s_2,p_2)$ we have that $s_1=s_2=w$ or $p_1=p_2=w$, then all words corresponding to such pairs generate the same transformation, since $w\in Syn(\mathscr{A})$.

We construct for each pair $(s,p)$ ($s$ or $p$ also can be empty words) a word in the syntactic semigroup which differs from $w$. Take the suffix $s$, append to it $2\cdot|w|-|s|$ times the letter, different from the last letter of $s$. Denote this letter by $\overline{x}$. If $s=\varepsilon$, then $\overline{x}$ is chosen to be different from the last letter of $w$. Further we append $2\cdot|w|-|p|$ times the letter, different from the first letter of $w$. Denote this letter by $\overline{y}$, then complete the word by adding $p$ to the end. By the construction $s$ is the maximal suffix of $w$ that appears in $u$ as a prefix, and $p$ is the maximal prefix of $w$, which is also a suffix of $u$. However, it could happen that after adding blocks of $\overline{x}$ and $\overline{y}$ the word $w$ appeared in $u$. It would be only in the case when $w=\overline{x}^{l}\overline{y}^{m}$ for some $l,m\ge1$. Assume that $l,m>1$, because $w\notin\{a^{n-1}b,\ ab^{n-1},\ ba^{n-1},\ b^{n-1}a\}$.  For such a word $w$ we construct $u$ as follows. First append the letter $\overline{x}$ as above. Then add the word $\overline{y}\overline{x}$, and then carry on the construction as described above. The word $w$ is not a factor of the constructed word $u$. So if $w\notin\{a^{n-1}b,\ ab^{n-1},\ ba^{n-1},\ b^{n-1}a\}$, then the syntactic semigroup of $L$ consists of at least $n^2+1$ elements.

If words $u$ referred to the pair $(s,p)$ have length at least $4n$, then these words define the same transformation of $\mathscr{A}_L$. Otherwise, applying the word $u$ to some state $q$ we may obtain another prefix $t$:
$$\overbrace{\underbrace{qu}_{t}\cdots}^{w}$$
This situation occurs only when $u$ is an inner factor of $w$. So we need to add in the transition semigroup all inner factors of $w$.

And finally the whole word $w$ always belongs to the transition semigroup.
\qed
\end{proof}

Note that for the word $a^{n-1}b$ it is impossible to construct words corresponding to pairs of the form $(\varepsilon,p)$ or $(b,p)$, where $p$ is a prefix $w$. In this case $\overline{x}=a$, $\overline{y}=b$. And the word $u$ constructed by the algorithm from the Theorem contains $w$ as a factor. Hence we need to find the words corresponding to the pairs of this form separately. There is no word in the transition semigroup corresponding to the pair $(\varepsilon,\varepsilon)$. Pairs $(\varepsilon,p)$ can be associated with the corresponding prefixes $p$, and pairs $(b,p)$ with the word $bp$. All factors of the word has already been considered, because they coincide with prefixes and suffixes. Finally, $\sigma(L)=1+n^2-2n+2n-1=n^2$.

Analogously, the same result holds for words $ab^{n-1}$, $b^{n-1}a$, and $ba^{n-1}$. The number of all different inner factors is estimated as $N(w)\leq\frac{(n-1)(n-2)}{2}$.
And we have the following estimation: $n^2\leq \sigma(L)\leq 1.5n^2+o(n^2)$. In particular, lower bound is tight. Moreover, the equality $\sigma(L)=n^2$ holds only for words $a^{n-1}b$, $ab^{n-1}$ and $b^{n-1}a$, $ba^{n-1}$.

The upper bound of the value $\sigma(L)$ is $1.5n^2+o(n^2)$. Next we give an example of the language $L_w=\Sigma^{*}w\Sigma^{*}$ for which the equality $\sigma(L_w)=1.5n^2+o(n^2)$ takes place.

\begin{proposition}
\label{upbound}
There exists a word $w$ of length $|w|=n\geq 21$ for which $\sigma(L_w)= 1.5n^2+o(n^2)$.
\end{proposition}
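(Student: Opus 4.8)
The plan is to reduce the statement to a combinatorial question about factor complexity and then to realize the extremal behaviour by a de Bruijn word. By the Theorem, for any $w\notin\{a^{n-1}b,ab^{n-1},ba^{n-1},b^{n-1}a\}$ we have $\sigma(L_w)=n^2+1+N(w)$, so it suffices to exhibit, for each $n\ge 21$, a word $w$ of length $n$ (necessarily outside the four excluded words) with $N(w)=\tfrac12 n^2+o(n^2)$. Recall that we already proved $N(w)\le\frac{(n-1)(n-2)}{2}=\tfrac12 n^2+O(n)$, so the task is to find a word whose number of distinct inner factors is within $o(n^2)$ of this maximum.

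First I would count distinct inner factors by length. An inner factor of length $\ell$ occupies a window that starts in one of the positions $2,\dots,n-\ell$, so there are $n-\ell-1$ such windows; writing $d_\ell$ for the number of distinct inner factors of length $\ell$, we have $d_\ell\le\min(2^\ell,\,n-\ell-1)$ and $N(w)=\sum_{\ell=1}^{n-2}d_\ell$. The binary cap $2^\ell$ is the smaller term only for $\ell\lesssim\log_2 n$, and the total loss it forces, namely $\sum_{\ell\le\log_2 n}\bigl((n-\ell-1)-2^\ell\bigr)$, is $O(n\log n)=o(n^2)$. Consequently, any word for which all windows of length $\ell$ are pairwise distinct for every $\ell\ge k+1$, where $k=\lceil\log_2 n\rceil$, already satisfies $N(w)\ge\sum_{\ell=k+1}^{n-2}(n-\ell-1)=\tfrac12 n^2+o(n^2)$, and together with the upper bound this gives $N(w)=\tfrac12 n^2+o(n^2)$.

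Next I would construct such a word from the de Bruijn graph of order $k$. Its vertices are the binary strings of length $k$ and its edges are the binary strings of length $k+1$; a walk using $e$ edges spells a word of length $k+e$, and the length-$(k+1)$ factors of this word are exactly the edges traversed. Thus a word in which all length-$(k+1)$ factors are distinct corresponds to a trail (a walk with no repeated edge), and, since distinct length-$(k+1)$ windows force distinctness of all longer windows as well, such a word has $d_\ell=n-\ell-1$ for every $\ell\ge k+1$. The de Bruijn graph is connected and has $2^{k+1}$ edges; for $k=\lceil\log_2 n\rceil$ we have $n-k\le 2^{k+1}$, so a trail using exactly $n-k$ edges exists (follow an Eulerian trail and stop), yielding a word $w$ of length $n$ with the required property. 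Such a $w$ contains both $aa$ and $bb$, hence lies outside the four excluded words, and the hypothesis $n\ge 21$ merely guarantees that $k$ is large enough for the de Bruijn block to fit.

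The main obstacle, and the step needing care, is the bookkeeping that turns the schematic count into the claimed asymptotics for every $n\ge 21$ rather than only for $n$ of a convenient form: one must check that the trail of the exact length $n-k$ is available, that passing from all length-$\ell$ windows to the inner ones and from windows to distinct factors in the threshold region $\ell\approx\log_2 n$ costs only $O(n\log n)$, and hence that the cumulative deficit from the maximum $\frac{(n-1)(n-2)}{2}$ is $o(n^2)$. Once this is verified, $N(w)=\tfrac12 n^2+o(n^2)$ and therefore $\sigma(L_w)=n^2+1+N(w)=\tfrac{3}{2}n^2+o(n^2)=1.5\,n^2+o(n^2)$, as required.
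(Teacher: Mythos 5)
Your proof is correct, but it takes a genuinely different route from the paper's. The paper proves the proposition by exhibiting the explicit word $w=ab^{2}a^{3}b^{4}\cdots a^{k-1}b^{k}$ (so $n=1+2+\cdots+k$ for even $k$), computing $N(w)$ through a long case analysis on the maximal block of $b$'s inside a factor, arriving at the closed formula $\sigma(L_w)=\tfrac{3}{2}n^2+\tfrac{5}{2}n-kn-3k+5$ with $k\sim\sqrt{2n}$, and then patching lengths $n$ not of triangular form by appending a suitable extra block. Your argument replaces all of this with a soft extremal one: since $N(w)\le\frac{(n-1)(n-2)}{2}$, it suffices to get within $o(n^2)$ of that cap, which any word whose windows of length $\lceil\log_2 n\rceil+1$ are pairwise distinct achieves, and such a word of every length $n$ is spelled by an initial segment of an Eulerian circuit in the de Bruijn graph of order $k=\lceil\log_2 n\rceil$. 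This buys uniformity in $n$ (no case split on the arithmetic shape of $n$), a much shorter computation-free argument, and in fact a sharper error term, $O(n\log n)$ versus the paper's $kn=\Theta(n^{3/2})$. The paper's approach, in exchange, yields a completely explicit word together with an exact formula for its syntactic complexity. One small patch is needed in your write-up: the claim that the de Bruijn word ``contains both $aa$ and $bb$'' is not justified (a short prefix of an Eulerian circuit need not traverse the edges $a^{k+1}$ or $b^{k+1}$), and it is also not the relevant point; the right way to see that your $w$ avoids the four excluded words $a^{n-1}b$, $ab^{n-1}$, $ba^{n-1}$, $b^{n-1}a$ is to note that for $n\ge 21$ each of them has a repeated window of length $k+1$ (for instance $a^{k+1}$ occurs at positions $1$ and $2$ of $a^{n-1}b$), whereas all such windows of your word are distinct by construction.
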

\begin{proof}
We prove the Proposition in a constructive way. Take the word $w=ab^{2}a^{3}b^{4}\cdots a^{k-1}b^{k}$, i.e. $n=1+2+\cdots+k$ for even $k\geq 4$. Count $N(w)$, the number of all different inner factors $v$ of the word $w$. Denote by $m$ the maximal power of $b$ that appears in $v$, i.e. $v=tb^{m}r$.

If $m=0$, then $v$ does not contain $b$, there are $k-1$ such factors.

If $m=1$, then we have three cases:

{\par}{\par}$v=ba^{l}$ ($0\leq l\leq k-1$);
{\par}{\par}$v=a^{l}b$ ($1\leq l\leq k-1$);
{\par}{\par}$v=ba^{l}b$ ($l$ is odd from $3$ to $k-1$).

There are $k+k-1+\frac{k-2}{2}=\frac{5}{2}k-2$ such factors.

If $m=2$, then we have three cases:

{\par}{\par}$v=b^2a^{l}$ ($0\leq l\leq k-1$);
{\par}{\par}$v=a^{l}b^2$ ($1\leq l\leq k-1$);
{\par}{\par}$v=b^{2}a^{l}b$, or $v=ba^{l}b^{2}$, or $v=b^{2}a^{l}b^{2}$ ($l$ is odd from $3$ to $k-1$).

There are $k+k-1+3\frac{k-2}{2}=\frac{7}{2}k-4$ such factors.

If $m=k-1$, then $v=tb^{k-1}$, where $0\leq |t|\leq n-k-1$. There are $n-k$ such factors.

Consider the case of odd $m$. Let $2<m<k-1$. We have the following cases:

{\par}{\par}$v=tb^{m}$, $0\leq |t|\leq 2+3+\cdots+m=\frac{(2+m)(m-1)}{2}$;
{\par}{\par}$v=b^{m}a^{l}$, $1\leq l\leq k-1$;
{\par}{\par}$v=b^{m}a^{l}b^{r}$, $1\leq r\leq m$, $m+2\leq l\leq k-1$, $l$ is odd;
{\par}{\par}$v=a^{l}b^{m}$, $m+1\leq l \leq k-1$;
{\par}{\par}$v=b^{r}a^{l}b^{m}$, $1\leq r\leq m-1$, $m+2\leq l\leq k-1$, $l$ is odd.

There are $\frac{m(m+1)}{2}+k-1+\frac{k-m-1}{2}m+k-1-m+\frac{k-m-1}{2}(m-1)$ such factors.

Consider the case of even $m$. Let $2<m<k-1$. The following cases are possible:

{\par}{\par}$v=tb^{m}$, $0\leq |t|\leq 2+3+\cdots+m-1=\frac{(1+m)(m-2)}{2}$;
{\par}{\par}$v=b^{m}a^{l}$, $1\leq l\leq k-1$;
{\par}{\par}$v=b^{m}a^{l}b^{r}$, $1\leq r\leq m$, $m+1\leq l\leq k-1$, $l$ is odd;
{\par}{\par}$v=a^{l}b^{m}$, $m\leq l \leq k-1$;
{\par}{\par}$v=b^{r}a^{l}b^{m}$, $1\leq r\leq m-1$, $m+1\leq l\leq k-1$, $l$ is odd;
{\par}{\par}$v=tb^{m}a^{l}$, $1\leq |t|\leq 2+3+\cdots+m-1=\frac{(1+m)(m-2)}{2}$ and $1\leq l \leq m+1$;
{\par}{\par}$v=tb^{m}a^{m+1}b^{r}$, $1\leq |t|\leq 2+3+\cdots+m-1=\frac{(1+m)(m-2)}{2}$ and $1\leq r \leq m$.

There are $\frac{m(m-1)}{2}+k-1+\frac{k-m}{2}m+k-m+\frac{k-m}{2}(m-1)+\frac{(m+1)^2(m-2)}{2}(m-1)+\frac{m(m+1)(m-2)}{2}$ such factors.

Finally, the total number of all different inner factors is equal to
$$
\begin{array}{l}
\displaystyle{
N(w)=k-1+\frac{5}{2}k-2+\frac{7}{2}k-4+n-k+ \sum^{k-3}_{m=3,\, m \text{ is odd}}\left(\frac{m(m+1)}{2}+\right.}\\
\smallskip
\displaystyle{\left.+k-1+\frac{k-m-1}{2}m+k-1-m+\frac{k-m-1}{2}(m-1)\right)+}\\
\smallskip
\displaystyle{+\sum^{k-2}_{m=4,\, m \text{ is even}}\left(\frac{m(m-1)}{2}+k-1+\frac{k-m}{2}m
+k-m+\frac{k-m}{2}(m-1)+\right.}\\
\smallskip
\displaystyle{+\frac{(m+1)^2(m-2)}{2}(m-1)+\frac{m(m+1)(m-2)}{2})=n+6k-7+}\\
\smallskip
\displaystyle{+\underbrace{\sum^{k-3}_{m=3}\left(\frac{m(m-1)}{2}+\frac{k-m}{2}(2m-1)+2k-m-1\right)}_{N_1(k)}+}\\
\smallskip
\displaystyle{+\underbrace{\sum^{k-3}_{m=3,m-\text{odd}}\left(-\frac{1}{2}\right)}_{N_2(k)}+\underbrace{\sum^{k-2}_{m=4,m-\text{even}}\left(m^3-\frac{m^2}{2}-\frac{5m}{2}-1\right)}_{N_3(k)}.}
\end{array}
$$
Count the values $N_1(k)$, $N_2(k)$, $N_3(k)$. It is easy to find $N_2(k)=-\frac{1}{2}\cdot\frac{k-4}{2}=-\frac{k-4}{4}$. Then using formula $2^3+4^3+\cdots+(2n)^3=2n^2(n+1)^2$ and $2^2+4^2+\cdots+(2n)^2$ obtain $N_3(k)=\frac{k^4}{8}-\frac{7}{12}k^3+\frac{k^2}{8}+\frac{7}{12}k+1$.
Finally, using the formula $1^2+2^2+\cdots+n^2=\frac{1}{3}n^2+\frac{1}{2}n^2+\frac{1}{6}n$, find $N_1(k)=\frac{k^3}{3}+\frac{k^2}{4}-\frac{103}{12}k+9$. Generalizing all results obtain $\sigma(L)=n^2+1+n+6k-7+N_1(k)+N_2(k)+N_3(k)=n^2+n+\frac{k^4}{8}-\frac{k^3}{4}+\frac{3}{8}k^2-\frac{9}{4}+5$. Note that $n=1+2+\cdots+k=\frac{k^2+k}{2}$. Then $\sigma(L)=\frac{3}{2}n^2+\frac{5}{2}n-kn-3k+5$, where $k$ is a positive root of the equation $k^2+k-2n=0$.

Let $n\geq 21$ can not be decomposed into a sum $n=1+2+\cdots+k$ for some even $k$. Find the number $k$, such that $1+2+\cdots+k<n\leq1+2+\cdots+k+1$. Consider the word $w=ab^{2}a^{3}\ldots b^{k}a^{l}$, where $l=n-\frac{1+k}{2}k$. Otherwise, $1+2+\cdots+k+1<n<1+2+\cdots+k+2$, then construct the word $w=ab^{2}a^{3}\ldots b^{k}a^{k+1}b^{r}$, where $r=n-\frac{2+k}{2}(k+1)$. Using analogous arguments one may verify that the equality $\sigma(L)= 1.5n^2+o(n^2)$ holds.
\qed
\end{proof}

The previous theorem provides rather simple formula to calculate the syntactic complexity of a principal ideal language. Indeed, we do not need to construct the minimal automaton of the language and then analyze its transition semigroup for an arbitrary $w$. We just need to count all different inner factors of $w$ and it can be done in time $O(n^2)$ in a trivial way.

\textbf{Acknowledgement}. The authors acknowledge support from the Presidential Programm for young researchers, grant MK-266.2012.1.

\end{document}